\documentclass[useAMS,usenatbib]{article}

\usepackage{natbib}
\usepackage{xcolor}
\usepackage{bbold}
\usepackage{amssymb}
\usepackage{amsmath}
\usepackage{graphicx}

\newtheorem{theorem}{Theorem}

\newcommand{\ew}{\mathbb{E}}
\newcommand{\qw}{\mathbb{P}}

\newcommand{\T}{\mathrm{\scriptscriptstyle T}}

\newcommand{\bo}{\boldsymbol{1}}
\newcommand{\bb}{\boldsymbol{\beta}}

\title{Robust Sequential Hypothesis Testing \\with Generalized Estimating Equations for \\Incomplete Clustered and Longitudinal Data}
\author{Nathan T. Provost$^*$ and
Abdus S. Wahed$^{**}$\\
\footnotesize Department of Biostatistics and Computational Biology,\\  \footnotesize School of Medicine and Dentistry, University of Rochester,\\
 \footnotesize Rochester, New York, U.S.A.\\
 \footnotesize$^*$nathan\_provost@urmc.rochester.edu \ $^{**}$abdus\_wahed@urmc.rochester.edu}
 \date{}

\begin{document}

\maketitle

\normalsize

\begin{abstract}
Existing sequential generalized estimating equation methodology for longitudinal and group-correlated data focuses on narrow hypotheses concerning treatment efficacy and often makes modeling assumptions that impede the desirable robustness of the involved test statistics. Drawing upon the well-established theory of incremental information gain for well-posed sequential analyses, we develop an approach that does not rely on modeling assumptions that infringe upon the robustness of the resulting estimators while simultaneously testing a much wider range of hypotheses. Our methodology provides general submatrix-level asymptotic theory for the evaluation of joint covariance matrices of sequential test statistics. Moreover, this framework allows us to construct a novel approach to computing efficacy boundaries, the likes of which can be estimated with greater precision at later interim times. These constructions also accommodate accessible multiple imputation procedures, thereby allowing for our approach to be applied to incomplete datasets. Type I error and power are assessed through a series of comprehensive simulations mirroring the simulations of recent work to facilitate a proper comparison. We conclude by applying our methods to a dataset from a longitudinal study concerning the impact of race on the efficacy a treatment for hepatitis C. 
\end{abstract}

\section{Introduction}
\label{s1}

Many prospective biomedical research studies, including clinical trials, involve repeated measures of the same outcome over time \cite[e.g.,][]{hepc2,hepc}. Although these studies are planned with fixed recruitment and follow-up periods, a key goal is to reach a valid conclusion as early as possible using the accumulated data. Doing so conserves resources and avoids exposing patients to ineffective treatments. The formal approach to evaluating the hypothesis at planned points during data collection is known as interim monitoring or group-sequential analysis.

Extensive previous work addressed the sequential analysis for longitudinal or clustered outcomes (e.g., \citet{asw,led,led95,lkt}). Compared to the case of univariate outcomes, a somewhat greater challenge with group-sequential analysis of such outcomes is accounting for the correlation among repeated measures. Early intuitive approaches based on accumulation statistics, such as those proposed by \citet{asw} and \citet{dng}, rely on the concept of increasing information over time, but suffer from intensive linearity and outcome normality assumptions \citep{jtg}, limiting their applications in modern analyses. Later, \citet{led} and \citet{led95} used mixed-effects models to broaden these methods, and the concept of information ratios, often simplified to ratios of interim group sizes, was further developed in works such as \citet{led95} and \citet{lkt}.

\citet{wsl} introduced group-sequential analysis for longitudinal data using the GEE framework of \citet{lz}, initially assuming an independent group correlation structure. This was later generalized by \citet{lkt} to allow broader correlation structures, though their theory required correct specification of the group correlation. The notions of accumulating information and independent increments, discussed in \citet{lkt} and foreshadowed in \citet{led95}, were formalized by \citet{jt}, who showed that the covariance between two score statistics at different times equals the variance at the earlier time. These developments established the foundation for group-sequential analysis with longitudinal data, with formal efficiency results presented in \citet{str}.

The hypotheses and model settings in these works were somewhat restrictive. \citet{lkt} considered only a two-sided hypothesis for a single scalar treatment effect, treating all other effects as nuisance parameters, allowing them to use semiparametric profile scores to induce a special matrix decomposition that underlies their methods and subsequent extensions using the same framework \cite[e.g.,][]{lbx}. \citet{slvk, slv} summarized these results, highlighting the cleanly partitioned matrix structure arising from the independent increments phenomenon described in earlier work \cite[e.g.,][]{led95}. \citet{lgb} used quadratic-form theory to study the scalar variances and covariance matrices of summary statistics formed by concatenating sequential $\chi^2$ statistics. \citet{jlw} improved family-wise error control for multiple testing, while \citet{cl} extended \citeauthor{led}’s linear mixed-model framework \citep{led} to a more general setting. Other contemporary work \cite[e.g.,][]{kse} used response-curve summaries based on linear mixed-effects models. A sequential sampling approach for longitudinal analysis was proposed by \citet{pc}. \citet{se} and \citet{sre} examined the performance of sequential methods derived from \citet{lz} when the independent-increments structure fails (e.g., due to efficiency violations; \citet{str}).

Overall, prior methods have largely focused on narrow hypotheses about treatment effects, treating other covariates as nuisance parameters. As a result, studies targeting more nuanced hypotheses are not well served by this framework. For example, investigators may be interested in higher-order interactions involving nuisance parameters and target estimands, such as treatment-by-time effects within predefined subgroups. \citet{jtg} considered a model including a treatment-by-time interaction adjusted for baseline covariates. Although this extends existing group-sequential methods, the hypothesis remains somewhat specific, and the maximum test statistic applies only when interest is limited to detecting any treatment effect at pre-specified, discrete time points. Moreover, the lack of a closed-form distribution for this maximum-based statistic limits further theoretical development.

Our work addresses these challenges by leveraging concepts closely related to the information fractions in \citet{led95}, \citet{lkt}, and \citet{slv}. Using a compound estimating equation based on the constructions of \citet{lkt}, we show that the full covariance matrix of the sequential test statistics can be estimated at any interim time using basic components of the sandwich estimator of \citet{lz}. This estimated covariance matrix enables computation of efficacy boundaries similar to those of \citet{p77} and \citet{obf}, after which interim monitoring inference proceeds in a standard manner. Unlike classical approaches such as \citet{lkt}, our method does not require a correctly specified working correlation matrix, thereby retaining the robustness of the original GEE framework of \citet{lz}. Although \citet{td} also achieves robustness through linear combinations of potentially inefficient estimators, pontificating that the efficiency condition of \citet{str} is sufficient but not necessary for independent increments, their work focuses on simple treatment-effect hypotheses. In contrast, our approach allows more flexible modeling and hypothesis testing, improves upon the restrictive systematic component in \citet{jtg}, and yields test statistics with closed-form $\chi^2$ distributions at each analysis time.

We also address missing data, an issue of unique difficulty in sequential analyses. Historical GEE-based methods typically assume observations are missing completely at random (MCAR) as in \citet{lz}. \citet{jtg} adopts this assumption, though a weaker missing-at-random (MAR) condition can be used under correct covariance specification. Our method can be readily combined with multiple-imputation procedures for GEE, such as the approach of \citet{r87}, providing a more general treatment of missing data than existing work like \citet{jtg} in that the working correlation need not be correctly specified when accounting for MAR data in our methods.

In this article, Section \ref{s2} reviews background on GEE and sequential testing. Section \ref{s3} presents our main methodological contributions, focusing on the joint distribution of the proposed test statistics. Section \ref{s4} reports simulation results, Section \ref{s5} presents the VIRAHEP-C study analysis, and Section \ref{s6} concludes with a discussion.

\section{Classical GEE and Group Sequential Tests}
\label{s2}
\subsection{Classical GEE}
\label{ss21}

Classical generalized estimation equation (GEE) models, as initially described by \citet{lz}, account for within-group correlation structures when analyzing the relationship between outcomes and a set of covariates for a prespecified link function. Specifically, the traditional GEE approach examines $n$ groups (indexed by $i=1,\dots,n$), the $i$th of which has $s_i$ total observations and an outcome vector $Y_i$ with $s_i$ components. Let $X_{ik}$, with $\mathrm{dim}(X_{ik})=p$ be the covariate vector corresponding to the $k$th observation in group $i$ and $X_i=[X_{i1}^\T \ \hdots \ X_{is_i}^\T]$ be the group specific covariate matrix. For some chosen link function $\mu$, GEE posits that $\ew(Y_{ik}\mid X_{ik})=\mu(X_{ik},\beta)$, where $Y_{ik}$ is the $k$th component of $Y_i$ and $\beta$ is a $p$-component vector of estimable coefficients. \citet{lz} showed that the solution $\hat{\beta}$ to the equation
\begin{equation}\sum_{i=1}^nD_{i}^\T V_i^{-1}(Y_i-\mu_{i})=0,\label{gee}\end{equation}
where $\mu_i\equiv(\mu(X_{i1},\beta),\hdots,\mu(X_{is_i},\beta))^\T$, $V_i\equiv V_i(X_i,\beta)$ is a group or individual specific $s_i\times s_i$ covariance matrix potentially dependent on covariate $X_i$ and parameter $\beta$, and $D_i\equiv D(X_{i},\beta)=\dot{\mu}(X_i,\beta)$ denotes the Jacobian matrix of $\mu(X_i,\beta)$ with respect to $\beta$, satisfies $\sqrt{n}(\hat{\beta}-\beta)\overset{d}{\rightarrow}\mathrm{MVN}(0,\Sigma)$, where
\small
\[\Sigma=\lim_{n\rightarrow\infty}n\left(\sum_{i=1}^nD_i^\T V_i^{-1}D_i\right)^{-1}\left\{\sum_{i=1}^nD_{i}^\T V_i^{-1}\mathrm{Cov}(Y_i) V_i^{-1} D_i\right\}\left(\sum_{i=1}^nD_i^\T V_i^{-1}D_i\right)^{-\T},\] 
\normalsize and $\mathrm{Cov}(Y_i)$ denotes the true outcome covariance matrix for group $i$. Moreover, they showed that $\Sigma$ can be consistently and robustly estimated by $\hat{\Sigma}=\hat{\Omega}^{-1}\hat{\Lambda}\hat{\Omega}^{-\T}$, where
\begin{equation}\hat{\Omega}=-\frac{1}{n}\sum_{i=1}^n\hat{D}_i^\T \hat{V}_i^{-1}\hat{D}_i,\quad\hat{\Lambda}=\frac{1}{n}\sum_{i=1}^n\hat{D}_{i}^\T \hat{V}_i^{-1}(Y_i-\hat{\mu}_{i})(Y_i-\hat{\mu}_{i})^\T \hat{V}_i^{-1} \hat{D}_i,\label{omla}\end{equation} 
where we have used a hat $(~\hat{}~)$ on functionals to indicate that the unknown parameters in those functionals are replaced with the corresponding consistent estimates, (e.g., $\hat D_i\equiv D(X_{i},\hat{\beta})$). We note also that $\hat{\Omega}^{-1}$ also provides a (na\"ive) estimate of the variance, but this is subject to bias under an incorrectly specified $V_i$. Applying the multivariate delta method, it follows that for any invertible function $h:\mathbb{R}^p\rightarrow\mathbb{R}^q$ with a well-posed $q\times p$ Jacobian matrix at $\beta$ given by $\dot{h}(\beta)$, the test statistic $T=n[\{h(\hat{\beta})-\gamma\}^\T\{\dot{h}(\hat\beta)\hat\Sigma\dot{h}^\T(\hat\beta)\}^{-1}\{h(\hat{\beta})-\gamma\}]$ can be used to test $H_0: h(\beta)=\gamma$ versus $H_a: h(\beta)\neq \gamma$ for any $\gamma\in\mathbb{R}^q$, since it holds that $T\sim\chi^2[\mathrm{rank}\{\dot{h}(\hat{\beta})\hat{\Sigma}\dot{h}^\T(\hat{\beta})\}]$ under $H_0$. For most clinical applications, it would suffice to choose $\gamma=\mathbb{0}_q$ and $h(\beta)=A\beta$ for some $q\times p$ contrast matrix $A$, resulting in the simplified test statistic $T=n(A\hat\beta)^\T(A\hat\Sigma A^\T)^{-1}A\hat{\beta}.$

\subsection{Na\"ive Group-Sequential Tests}
\label{ss22}
Suppose a maximum of $M$ analyses are planned throughout the study to test the hypothesis $H_0$ based on data accumulated up to the analysis time $t_m, m = 1, 2, \ldots, M$ $(t_i < t_j, 1\leq i < j\leq M)$. Thus, there are $M-1$ planned interim analyses with the $M$th analysis being the full study setting. For each analysis time $m$, suppose that we only have $n_m\leq n_M=n$ of the groups at our disposal to estimate the target estimand $\beta=(\beta_1,\dots,\beta_p)^\T$. For this development, we assume that all $n_m$ groups have complete data. The case with missing or partial information will be discussed later. Denote the estimate of $\beta_j$ constructed with all information available at time $t_m$ by $\hat{\beta}_{mj}$ and let $\hat{\beta}_m=(\hat{\beta}_{m1},\hdots,\hat{\beta}_{mp})^\T$. Hence, if each $\hat{\beta}_m$ were obtained by constructing a GEE model from \citet{lz} with the currently available $n_m$ groups, it would hold that since $\sqrt{n_m}\{h(\hat{\beta}_m)-\gamma\}\overset{d}{\rightarrow}\mathrm{MVN}\{0,\dot{h}(\beta)\Sigma \dot{h}^\T(\beta)\}$ if $h(\beta)=\gamma$, the test statistics
\begin{equation}T_m=n_m[\{h(\hat{\beta}_m)-\gamma\}^\T\{\dot{h}(\hat{\beta}_m)\hat\Sigma_m\dot{h}^\T(\hat{\beta}_m)\}^{-1}\{h(\hat{\beta}_m)-\gamma\}]\quad (m=1,\dots,M)\label{eq:falset}\end{equation}
could be used to test $H_0: h(\beta)=\gamma$ against $H_a: h(\beta)\neq \gamma$, where $\hat{\Sigma}_m$ is defined analogously to the estimator from \eqref{omla} with $n_m$ replacing $n$ and $\hat{\beta}_m$ replacing $\hat\beta$. A na\"ive approach would be to use each test statistic at its respective interim and perform a $\chi^2$ test with $\mathrm{rank}\{\dot{h}(\hat{\beta}_m)\hat\Sigma_m\dot{h}^\T(\hat{\beta}_m)\}$ degrees of freedom. As shown in historical literature \cite[e.g.,][]{a69}, this inflates the type I error of the test. The majority of sequential analysis methods, including historical and contemporary works, focus on abating this issue by computing the joint distribution of these test statistics and using it to compute adjusted critical values in accordance with a predetermined efficacy boundary approach \citep{p77,obf}. In the next section, we develop our method for obtaining this joint distribution and computing such boundaries.

\section{Joint Distribution of Sequential Statistics \\ Through Compound GEE}
\label{s3}
\subsection{Joint Distribution of Sequential Test Statistics}
\label{ss31}
 
Suppose that the subjects are indexed according to their arrival in the study, which is presumed to be random in order. Let $\xi_{im}$ be an indicator of whether or not subject $i$ arrives by time $t_m$, $\xi_i = (\xi_{i1}, \ldots, \xi_{iM})^\T$, and denote $\psi(X_i; {\beta})=D^\T(X_i,\beta)V^{-1}(X_i,\beta)(Y_i-\mu_{i}).$ Then the $M$ estimators of $\beta$ from the $M$ analyses, stacked as the concatenated $Mp \times 1$ vector
$\hat{\boldsymbol{\beta}}=(\hat{\beta}_1^\T,\dots,\hat{\beta
}_M^\T)^\T$, can be written as a solution for the argument $\boldsymbol{\beta} = ({\beta}_1^\T,\dots,{\beta}_M^\T)^\T$ satisfying the compound estimating equation 
\begin{equation}\sum_{i=1}^n\Psi_i (X_i, \xi_{i} ; \boldsymbol{\beta})=0, \label{psi}\end{equation}
where $
\Psi_i (X_i,\xi_{i} ; \boldsymbol{\beta}) = (\xi_{i1}\psi^\T_{i1}, \ldots, \xi_{iM}\psi^\T_{iM})^\T$ with $\psi_{im}\equiv \psi(X_i,{\beta_m}),$ for every $ m = 1, 2, \ldots, M$, is the stacked kernel of estimating equations from $M$ analyses. We note that $\xi_{iM}=1$ for all $i=1, \ldots, n$, by definition, and that each subvector $\xi_{im}\psi_{im}$ was originally considered separately in \citet{wsl} and \citet{lkt}.  Each $\hat{\beta}_m$ estimates the same estimand $\beta$, so $\hat{\boldsymbol{\beta}}$ estimates $\bo_M\otimes\beta$, where $\bo_M=\mathrm{diag}(\boldsymbol {I}_M)$, $\boldsymbol {I}_M$ being an $M \times M$ identity matrix. 

Since \eqref{psi} is a valid estimating equation, following the theory laid out in \citet{lz} and the asymptotic properties of M-estimators from \cite{sb}, we have $\sqrt{n}\{\hat{\bb} - (\bo_M\otimes\beta)\}\overset{d}{\rightarrow}\mathrm{MVN}(0,\boldsymbol{\Sigma})$ with the $Mp\times Mp$ covariance matrix 
\begin{equation}
\boldsymbol{\Sigma} = \boldsymbol\Omega^{-1}\boldsymbol\Lambda\boldsymbol\Omega^{-\T},\label{eq:boldsigma}\end{equation}  
where
\begin{equation}
\boldsymbol{\Omega}=-\lim_{n\rightarrow\infty}\frac{1}{n}\sum_{i=1}^n \ew\left[\dot{\Psi}_i\{X_i,\xi_i;(\bo_M\otimes\beta)\} \ \Big| \ X_i\right],
\label{eq: boldomega}\end{equation}
with $\dot{\Psi}_i\{X_i,\xi_i;(\bo_M\otimes\beta)\}$ being the Jacobian of $\Psi_i (X_i,\xi_{i} ; \bb)$ evaluated at $\bo_M\otimes\beta$, and
\begin{equation}
\boldsymbol{\Lambda}=\lim_{n\rightarrow\infty}\frac{1}{n}\sum_{i=1}^n \ew \left[\Psi_i \{X_i,\xi_{i} ; (\bo_M\otimes\beta)\} \Psi_i^\T \{X_i,\xi_{i} ;(\bo_M\otimes\beta)\}\mid X_i\right].
\label{eq: boldlambda}\end{equation}
These constructions are analogous to the components outlined in \eqref{omla}. Therefore, the asymptotic joint distribution of $\hat{\boldsymbol{\beta}}$ can be fully characterized through the knowledge of true $\beta$ and $\boldsymbol{\Sigma}$. Had the full sample data been available, $\boldsymbol{\Sigma}$ could have been estimated by the usual robust sandwich estimator defined using \eqref{omla}.  At any analysis time $m<M$, though, all information, specifically the data from the $(n_m+1)$th to the $n$th subject, will be unavailable and hence such an estimator cannot be directly calculated based on data at the $m$th interim. To address this challenge, we take advantage of the structural separation among the stacked elements of the estimating equation \eqref{psi}. More specifically, the $m$th $p \times 1$ block of $
\Psi_i (X_i,\xi_{i} ; \boldsymbol{\beta})$, namely, $\xi_{im}\psi_{im}$ involves only $\beta_m$. This ensures that the matrix $\boldsymbol{\Omega}$ can be written as a block-diagonal matrix  as $\boldsymbol{\Omega} = \mathrm{diag}(\Omega_1, \ldots, \Omega_M)$, where, since $\ew(\xi_{im}\mid X_i)=\xi_{im}$,
\small
\begin{equation}\Omega_{m}=-\lim_{n\rightarrow\infty}\frac{1}{n}\sum_{i=1}^n \ew\left\{\xi_{im}\dot{\psi}(X_i;\beta_m) \ \Big| \ X_i\right\}=-\lim_{n\rightarrow\infty}\frac{1}{n}\sum_{i=1}^n \xi_{im}\ew\left\{\dot{\psi}(X_i;\beta_m) \ \Big| \ X_i\right\},\label{eq:omm}\end{equation}
\normalsize
for all $m=1,\dots, M$, with $\dot{\psi}(X_i;\beta_m)$ being the Jacobian of $\psi(X_i;\beta_m)$ with respect to $\beta_m$. Second, we note that $\xi_{im'}=1$ implies $\xi_{im} =1$ for all $m^\prime < m,$ and $m, m^\prime \in\{ 1, 2, \ldots, M\},$ which allows us to write $\boldsymbol{\Lambda}$ as a $Mp\times Mp$ block matrix given by $\boldsymbol{\Lambda} = (\Lambda_{mm^\prime})$, where $\Lambda_{mm^\prime}$ is given by
\begin{equation}\lim_{n\rightarrow\infty}\frac{1}{n}\sum_{i=1}^n \xi_{im'} D^\T(X_i,\beta_m) V^{-1}(X_i,\beta_m)\mathrm{Cov}(Y_i)V^{-1}(X_i,\beta_{m'})D(X_i,\beta_{m'}),\label{eq:lmm}\end{equation}
for any $m^\prime \leq m$ with $\Lambda_{mm^\prime} = \Lambda^\T_{m^\prime m}$, both being $p\times p$ submatrices. This natural decomposition of $\boldsymbol{\Omega}$ and $\boldsymbol{\Lambda}$ enables us to write $\boldsymbol{\Sigma}$ as a block matrix
$\boldsymbol{\Sigma} = (\Sigma_{mm^\prime})$, where each submatrix has the form $\Sigma_{mm^\prime} = \Omega_m^{-1}\Lambda_{mm^\prime}\Omega_{m^\prime}^{-\T}, \;m, m^\prime \in\{ 1, 2, \ldots, M\}.$ This block representation aids in the approximation of the entire $Mp\times Mp$ covariance matrix as we show later, since each submatrix mirrors the form of the robust covariance matrix defined in \eqref{omla}.

From these constructions, the asymptotic joint distribution of the standardized GEE coefficient estimates at the $m$th interim can be characterized using the asymptotic distribution of the stacked standardized coefficient vector ($\hat{\bb}$). Let $\eta_m=\sqrt{n_m}(\hat{\beta}_m-\beta)$, $\Pi_m=(\sqrt{n_m/n})\boldsymbol{I}_p$ for all $m=1,\dots, M$, $\boldsymbol{\eta}=(\eta_1,\dots,\eta_M)^\T$, 
$\boldsymbol{\omega}=\sqrt{n}\{\hat{\bb} - (\bo_M\otimes\beta)\}$, and $\boldsymbol{\Pi}=\mathrm{diag}(\Pi_1,\dots,\Pi_M)$. Then rudimentary matrix algebra guarantees that $\boldsymbol{\eta}=\boldsymbol{\Pi}\boldsymbol{\omega}$, and  since $\boldsymbol{\omega}\overset{d}{\rightarrow}\mathrm{MVN}(0,\boldsymbol{\Sigma})$, 
it holds by Slutsky's theorem that $\boldsymbol{\eta}\overset{d}{\rightarrow}\mathrm{MVN}(0,\boldsymbol{\Gamma})$ where $\boldsymbol{\Gamma}=\boldsymbol{\Pi}\boldsymbol{\Sigma}\boldsymbol{\Pi}$. Since $\boldsymbol{\Pi}$ is a block diagonal matrix, $\boldsymbol{\Gamma}$ inherits the useful block decomposition of $\boldsymbol{\Sigma}$, that is, we can write $\boldsymbol{\Gamma}=(\Gamma_{mm'})$, where $\Gamma_{mm'}=\{(\sqrt{n_mn_{m'}})/n\}\Sigma_{mm'}$ for all $m,m'\in\{1,\dots,M\}$.

\subsection{Estimation Procedure for $\boldsymbol{\Sigma}$ Using Relative Information}
\label{ss32}

Using the compound score equation given in \eqref{psi} and the block matrix estimand expressions defined in \eqref{eq:omm} - \eqref{eq:lmm}, we can naturally define the estimators
\[\hat{\Omega}_m=-\frac{1}{n}\sum_{i=1}^n\xi_{im}D^\T(X_i,\hat{\beta}_m)V^{-1}(X_i,\hat{\beta}_m)D(X_i,\hat{\beta}_m)\]
\[\hat\Lambda_{kr}=\frac{1}{n}\sum_{i=1}^n\xi_{ir}D^\T(X_i,\hat{\beta}_k)V^{-1}(X_i,\hat{\beta}_k)(Y_i-\hat{\mu}_{ik})(Y_i-\hat{\mu}_{ir})^\T V^{-1}(X_i,\hat{\beta}_r)D(X_i,\hat{\beta}_r)\]
and $\hat{\Lambda}_{rk}=\hat{\Lambda}_{kr}^\T$ for $k\geq r$, where $\hat{\mu}_{im}=(\mu(X_{i1},\hat{\beta}_m),\dots,\mu(X_{is_i},\hat{\beta}_m))^\T$. These are to be contrasted with their standard GEE counterparts given by
\small
\[\hat{\Omega}_m^*=-\frac{1}{n_m}\sum_{i=1}^{n_m}D^\T(X_i,\hat{\beta}_m)V^{-1}(X_i,\hat{\beta}_m)D(X_i,\hat{\beta}_m)\]
\[\hat\Lambda_{m}^*=\frac{1}{n_m}\sum_{i=1}^{n_m}D^\T(X_i,\hat{\beta}_m)V^{-1}(X_i,\hat{\beta}_m)(Y_i-\hat{\mu}_{im})(Y_i-\hat{\mu}_{im})^\T V^{-1}(X_i,\hat{\beta}_m)D(X_i,\hat{\beta}_m),\]
\normalsize
both of which could be used to construct a robust variance estimator $(\hat{\Sigma}_m)$ in the traditional sense. The main result, stated below, utilizes the relationship between the standard GEE submatrices and our compound GEE submatrices, along with the ideas relating the two respective distributions as presented at the end of Subsection \ref{ss31}, to provide a general framework for estimating the overall joint covariance matrix of the test statistics by working from the submatrix scale upward.

\begin{theorem}
    Let $\pi_m=\ew(\xi_{im})$, $\Tilde{\Omega}_m=\pi_m\hat{\Omega}_m^*$ and $\Tilde{\Lambda}_{mm}=\pi_m\hat{\Lambda}_m^*$. Then $\Tilde{\Omega}_m$ and $\hat{\Omega}_m$ converge to the same limit in probability, $\Tilde{\Lambda}_{mm}$ and $\hat{\Lambda}_{mm}$ converge to the same limit in probability,  and, if the maximum number of groups examined at the final analysis is fixed, then for $m=1,\dots,M$ and $ 1\leq r\leq k\leq M$,
    \[\hat{\Omega}_r^{(m)}=\frac{n_r}{n_m}\Tilde{\Omega}_m,\ \ \hat{\Lambda}_{kr}^{(m)}=\frac{n_r}{n_m}\Tilde{\Lambda}_{mm},\ \ \hat{\Lambda}_{rk}^{(m)}=\left\{\hat{\Lambda}_{kr}^{(m)}\right\}^\T\]
    are consistent estimators for $\Omega_r \ (r=1,\dots,M)$, $\Lambda_{kr}$, and $\Lambda_{rk} \ (1\leq r\leq k\leq M)$, respectively. Hence,  $\hat{\boldsymbol{\Omega}}^{(m)}=\mathrm{diag}(\hat\Omega_1^{(m)},\dots,\hat\Omega_M^{(m)})$ and $\hat{\boldsymbol{\Lambda}}^{(m)}=\left[\hat \Lambda_{kr}^{(m)}\right]_{k,r=1,\dots,M}$ are both consistent estimators for $\boldsymbol{\Omega}$ and $\boldsymbol{\Lambda}$, and
    \[\hat{\boldsymbol{\Sigma}}^{(m)}=\left\{\hat{\boldsymbol{\Omega}}^{(m)}\right\}^{-1}\hat{\boldsymbol{\Lambda}}^{(m)}\left\{\hat{\boldsymbol{\Omega}}^{(m)}\right\}^{-\T}\]
    is a consistent estimator for $\boldsymbol\Sigma$ at any time $t_m$. $\quad\Box$
    
\end{theorem}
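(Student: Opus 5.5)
We first compare $\tilde\Omega_m$ with $\hat\Omega_m$, and $\tilde\Lambda_{mm}$ with $\hat\Lambda_{mm}$, using the random arrival order. Since subjects are indexed by arrival and $\xi_{im}=1$ exactly for $i\le n_m$, we have the identity
\[\hat\Omega_m=\frac{n_m}{n}\hat\Omega_m^*.\]
By the law of large numbers applied to the i.i.d.\ indicators $\xi_{im}$, $n_m/n\to\pi_m$ in probability. Hence
\[\hat\Omega_m-\tilde\Omega_m=(n_m/n-\pi_m)\hat\Omega_m^*.\]
The factor $\hat\Omega_m^*$ is $O_p(1)$: it is the usual GEE ``bread'' of \citet{lz} built from $n_m\to\infty$ groups at the consistent $\hat\beta_m$, so it converges to $-\lim n_m^{-1}\sum_{i\le n_m}\ew\{D_i^\T V_i^{-1}D_i\}$. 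Therefore the difference tends to zero in probability. The identical argument, with the ``meat'' $\hat\Lambda_m^*$ in place of $\hat\Omega_m^*$, gives $\hat\Lambda_{mm}=(n_m/n)\hat\Lambda_m^*$ and the second claim. It follows that $\tilde\Omega_m$ and $\tilde\Lambda_{mm}$ consistently estimate $\Omega_m$ and $\Lambda_{mm}$ in \eqref{eq:omm}--\eqref{eq:lmm}. For the diagonal block $\Lambda_{mm}$ this uses the standard fact that, because $\hat\beta_m\to\beta$, the outer product of residuals averages to the $\mathrm{Cov}(Y_i)$-weighted limit.

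Next we reduce every block to a common ``per-group'' quantity. Define
\[\bar\Omega=-\lim n^{-1}\sum_i\ew\{D_i^\T V_i^{-1}D_i\}, \qquad \bar\Lambda=\lim n^{-1}\sum_i D_i^\T V_i^{-1}\mathrm{Cov}(Y_i)V_i^{-1}D_i,\]
both evaluated at the common true $\beta$. Because all $\beta_m$ share this value, and because arrival is random, the selection $\xi_{im}$ is independent of $(X_i,Y_i)$. Under this independence the law of large numbers gives $\Omega_r=\pi_r\bar\Omega$. For $r\le k$, the factor $\xi_{ir}$ in \eqref{eq:lmm} similarly gives $\Lambda_{kr}=\pi_r\bar\Lambda$, since the nesting $\xi_{ik}=1\Rightarrow\xi_{ir}=1$ collapses $\xi_{ik}\xi_{ir}$ to $\xi_{ir}$. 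Consequently the limits obey
\[\Omega_r=(\pi_r/\pi_m)\Omega_m, \qquad \Lambda_{kr}=(\pi_r/\pi_m)\Lambda_{mm}.\]
The planned sample sizes are fixed, so the ratio $n_r/n_m$ is a deterministic sequence converging to $\pi_r/\pi_m$ (equivalently, $n_r/n\to\pi_r$ for every $r$, including $r>m$). Combining this with the first step, Slutsky's theorem gives
\[\hat\Omega_r^{(m)}=(n_r/n_m)\tilde\Omega_m\to\Omega_r, \qquad \hat\Lambda_{kr}^{(m)}\to\Lambda_{kr}.\]
Symmetry then yields the $\Lambda_{rk}$ blocks.

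Finally, blockwise consistency gives consistency of $\hat{\boldsymbol\Omega}^{(m)}$ and $\hat{\boldsymbol\Lambda}^{(m)}$, since $M$ and $p$ are finite. $\boldsymbol\Omega$ is block-diagonal with each $\Omega_r=\pi_r\bar\Omega$ invertible, because $\pi_r>0$ and the usual GEE regularity condition makes $\bar\Omega$ nonsingular. Matrix inversion and multiplication are continuous at such points, so the continuous mapping theorem yields $\hat{\boldsymbol\Sigma}^{(m)}\to\boldsymbol\Sigma$ in probability.

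The main obstacle is the second step: justifying that $\Omega_r$ and $\Lambda_{kr}$ are exact scalar multiples of a common matrix. This needs the arrival indicators to be independent of the covariates and outcomes, and the heterogeneity of $X_i$ to average out identically across arrival cohorts. I would state this explicitly as the content of ``random arrival order'' and handle it via a conditional law of large numbers given $\xi$. A secondary point is extrapolation to $r>m$, where no data exist. Here the fixed final sample size is essential, since it makes $n_r$ known at time $t_m$, so that only $\tilde\Omega_m$ and $\tilde\Lambda_{mm}$ need to be estimated from data.
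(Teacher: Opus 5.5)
Your proof is correct. Its first step is more economical than the paper's; its extrapolation step is the same as the paper's. The paper expands $\hat\Omega_m$ and $\pi_m\hat\Omega_m^*$ separately and shows both tend to $-\pi_m\mathcal{E}$: it assumes a common mean $\mathcal{E}=\ew\{\dot\psi(X_i,\beta)\}$ for all groups, factors $\ew\{\xi_{im}\dot\psi\}=\pi_m\mathcal{E}$, and invokes approximation by averages (Theorem 5.28 of Boos and Stefanski) under Hessian and moment conditions on $\mathcal{F}$. You instead use the exact identities $\hat\Omega_m=(n_m/n)\hat\Omega_m^*$ and $\hat\Lambda_{mm}=(n_m/n)\hat\Lambda_m^*$. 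The first two claims then need only $n_m/n\to\pi_m$ and boundedness in probability of the standard bread and meat, with no smoothness or homogeneity assumptions. Under the paper's own convention $\pi_m=n_m/n$, your identities even give $\tilde\Omega_m=\hat\Omega_m$ and $\tilde\Lambda_{mm}=\hat\Lambda_{mm}$ exactly. For the blocks with $r\neq m$, both arguments need arrival independent of $(X_i,Y_i)$ with identical per-group limits across arrival cohorts (the paper's $\mathcal{E}$, your $\bar\Omega$ and $\bar\Lambda$), together with $n_r/n_m=\pi_r/\pi_m$ from the fixed design. Your write-up isolates this as the only substantive assumption and checks that $\boldsymbol\Omega$ is nonsingular before the continuous mapping step. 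The paper's separate expansions would still work if the compound and standard estimators were not exact rescalings of each other, but in this setting they are. Two small slips. First, for $r\le k$ the nesting is $\xi_{ir}=1\Rightarrow\xi_{ik}=1$; your conclusion $\xi_{ik}\xi_{ir}=\xi_{ir}$ is right, but the implication is written backwards. Second, with subjects indexed by arrival the $\xi_{im}$ are non-increasing in $i$, not i.i.d. You can fix this in either of two ways. One is to drop the arrival indexing, since the identity only needs $\hat\Omega_m^*$ to average over the arrived groups. The other is to replace the law-of-large-numbers appeal with the fixed-design convergence $n_m/n\to\pi_m$, which you invoke later anyway.
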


    \noindent\textbf{Proof}: Let $\mathcal{F}(X_i,\hat{\beta}_m)=\ew\{\xi_{im}\dot{\psi}(X_i,\hat{\beta}_m)\mid X_i\}$ and assume $\ew\left\{\dot{\psi}(X_i,\beta)\right\}=\mathcal{E}$, $|\mathcal{E}|<\infty$. Then $\ew\{\mathcal{F}(X_i,\hat{\beta}_m)\}=\ew\{\ew(\xi_{im}\mid X_i)\}\ew[\ew\{\dot{\psi}(X_i,\beta)\mid X_i\}]=\ew(\xi_{im})\ew\{\dot{\psi}(X_i,\beta)\}=\pi_m\mathcal{E}$. From \citet{lz}, each $\hat{\beta}_{mj}\overset{p}{\rightarrow}\beta_j$ and hence admits an approximation by averages in the sense outlined in Theorem 5.28 of \citet{bs}. Assume that each entry of $\mathcal{F}$ admits a Hessian matrix with respect to its parameter argument that in turn has entries whose absolute value is finite in expectation within a neighborhood of $\beta$, and that $\mathcal{F}$ has entries that are finite in variance and have gradients that are finite in expectation. Then by Theorem 5.28 of \citet{bs}, 
    \[\hat{\Omega}_m=-\frac{1}{n}\sum_{i=1}^n\xi_{im}D^\T(X_i,\hat{\beta}_m)V^{-1}(X_i,\hat{\beta}_m)D(X_i,\hat{\beta}_m)=-\pi_m\mathcal{E}+o_p(n^{-1/2}).\]
    By the same argument, $\hat{\Omega}_m^*=-\mathcal{E}+o_p(n_m^{-1/2})$, so $\pi_m\hat{\Omega}_m^*=-\pi_m\mathcal{E}+o_p(n_m^{-1/2})$. Thus, $\hat{\Omega}_m$ and $\Tilde{\Omega}_m=\pi_m\hat{\Omega}_m^*$ converge to the same limit in probability. If $n_m/n<\infty$ and is fixed, then $\pi_m=\ew[\xi_{im}]=n_m/n$ and $\pi_r/\pi_m=n_r/n_m$, so $(n_r/n_m)\Tilde{\Omega}_m=-\pi_r\mathcal{E}+o_p(n_m^{-1/2})$, so $(n_r/n_m)\Tilde{\Omega}_m$ converges to the same limit as $\hat{\Omega}_r$ in probability. Thus, $\hat{\boldsymbol{\Omega}}^{(m)}\overset{p}{\rightarrow}\boldsymbol{\Omega}$. This proof applies identically to $\hat{\boldsymbol{\Lambda}}^{(m)}$, hence it is clear that $\hat{\boldsymbol{\Lambda}}^{(m)}\overset{p}{\rightarrow}\boldsymbol{\Lambda}$, and thus, $\hat{\boldsymbol{\Sigma}}^{(m)} \overset{p}{\rightarrow}\boldsymbol{\Sigma}.$ $\hfill\Box$ 

Each $\sqrt{n}_r\hat{\beta}_r$ ($r=1,\dots,M$) has its own robust variance matrix estimate $\hat{\Sigma}_r$ obtained through standard GEE, which is equivalent to an estimate of $\Gamma_{rr}$ as discussed before Subsection \ref{ss32}. 

Recalling the discussion at the end of Subsection \ref{ss31}, since the marginal distributions must equate,
one can approximate $\hat{\Sigma}^{(m)}_{rr}$ with $(n/n_r)\hat{\Sigma}_r$ and, equivalently, $\hat{\Sigma}_r$ with $(n_r/n)\hat{\Sigma}^{(m)}_{rr}$, implying  $\hat{\Sigma}^{(m)}_{rr}/n=\hat{\Sigma}_r/n_r$. This dynamic is important for the discussion of Monte Carlo methods in Subsection \ref{ss33}. This is an informal justification of the prior result, rooted implicitly in the historical notion of information fractions as mentioned in \citet{lkt}, \citet{ld}, and \citet{slvk} among others, which in our notation is passively given by $\pi_m$.

\subsection{Boundary Calculation for Sequential Testing}
\label{ss33}

At the $m$th interim time, we can obtain an overall variance matrix estimator using the methods detailed in Theorem 1. We also have $\hat{\beta}_m=(\hat{\beta}_{m1},\dots,\hat{\beta}_{mp})^\T$, with which we define $\hat{\boldsymbol{\beta}}^{(m)}=\bo_M\otimes\hat{\beta}_m$. Recalling that every $\hat{\beta}_m$ is a consistent estimator for $\beta$, it follows that $\hat{\boldsymbol{\beta}}^{(m)}$ is a consistent estimator for $\bo_M\otimes\beta$. Therefore, it is now possible to empirically estimate the joint distribution of $T$ using $\hat{\boldsymbol{\beta}}^{(m)}$ and $\hat{\boldsymbol{\Sigma}}^{(m)}$ using Monte Carlo methods. As discussed in \citet{wt}, an efficacy boundary constitutes a set of critical values $\{\tau_m\}_{m=1}^M$ that satisfy $\mathcal{B}\left(\{\tau_m\}_{m=1}^M\right)=\alpha$, where the boundary function $\mathcal{B}\left(\{\tau_m\}_{m=1}^M\right)$ is given by
\begin{equation}\qw_{H_0}(T_1>\tau_1)+\sum_{m=2}^M\qw_{H_0}\left(T_j>\tau_j \ \bigg| \ \bigcap_{c=1}^{j-1}\{T_c\leq\tau_c\}\right)\qw_{H_0}\left(\bigcap_{c=1}^{j-1}\{T_c\leq\tau_c\}\right),\label{eq:bf}\end{equation}
where $\alpha$ is the specified type I error and $\qw_{H_0}$ is the probability measure under the null hypothesis. Since we have all the necessary parameters needed for the distribution of $\bb$, it is possible to draw from its approximated distribution and use these draws to estimate the value of $\mathcal{B}\left(\{\tau_m\}_{m=1}^M\right)$ for any specified $\tau_1,\dots,\tau_M$. In the case of the efficacy boundary set forth by \citet{p77}, which takes $\tau_1=\dots=\tau_M=\tau$, it is possible to find
\[\tau^*=\underset{\tau}{\mathrm{argmin}}\left\{|\mathcal{B}(\tau)-\alpha|\right\},\]
using traditional optimization methods if we have empirical estimates for all necessary probabilities. We note that under $H_0$, $h(\beta)=\gamma$, which implies that $\beta=h^{-1}(\gamma)$, and since $\hat{\beta}_m\overset{\qw}{\rightarrow}\beta$ for any $m$, $\hat{\beta}_m\overset{\qw}{\rightarrow} h^{-1}(\gamma)$. Noting that $h(\beta)-\gamma=0$ under $H_0$ and drawing some large number $B$ of samples $\{\beta^{(b)}\}_{b=1}^B$, where $\beta^{(b)}\sim\mathrm{MVN}\{\mathbb{0}_{Mp},\hat{\boldsymbol{\Sigma}}^{(m)}/n\}$ from which each $\beta^{(b)}$ can be written as $\beta^{(b)}=(\beta^{(b)}_1,\dots,\beta^{(b)}_M)^\T$ where each $\beta^{(b)}_r$ is a $p$-dimensional subvector, we can compute the $B$ total empirical test statistic draws $\{T^{(b)}\}_{b=1}^B=\{(T^{(b)}_1,\dots,T^{(b)}_M)^\T\}_{b=1}^B$, where 
\begin{equation}T_r^{(b)}=n[\{h(\beta_r^{(b)})-\gamma\}^\T\{\dot{h}(\beta_r^{(b)})\hat\Sigma_{rr}^{(m)}\dot{h}^\T(\beta_r^{(b)})\}^{-1}\{h(\beta_r^{(b)})-\gamma\}]\quad (r=1,\dots,M).\label{eq:mcts}\end{equation} 
Note that we have used the fact that $\hat{\Sigma}_{rr}^{(m)}/n=\hat{\Sigma}_r/n_r$ as discussed at the end of Subsection \ref{ss32} to expedite the computation of our Monte Carlo test statistics. This relationship guarantees that our representation given in \eqref{eq:mcts} is equivalent to the test statistics given in \eqref{eq:falset}.

With these draws, the empirically estimated \citet{p77} boundary function is
\begin{equation}\widehat{\mathcal{B}}_{\mathrm{P}}(\tau)=\widehat{\qw_{H_0}}(T_1>\tau)+\sum_{m=2}^M\widehat{\qw_{H_0}}\left(T_j>\tau \ \bigg| \ \bigcap_{c=1}^{j-1}\{T_c\leq\tau\}\right)\widehat{\qw_{H_0}}\left(\bigcap_{c=1}^{j-1}\{T_c\leq\tau\}\right),\label{eq:bfef}\end{equation}
where for any condition involving any or all of the test statistics given by $C(T^{(b)})$ (e.g., ($C(T^{(b)})=\{T_1^{(b)}>\tau\}$), we define
\begin{equation}\widehat{\qw_{H_0}}\{C(T^{(b)})\}=\frac{1}{B}\sum_{b=1}^B\mathbb{1}\{C(T^{(b)})\},\label{eq:empprob}\end{equation}
as the necessary empirical probability estimators. \citet{wt} demonstrated that the boundary from \citet{p77} is a special case of the more general boundaries they recommend. Following from their discussion, the popular boundaries due to \citet{obf} that use the empirically estimated boundary function $\widehat{\mathcal{B}}_{\mathrm{OBF}}(\tau)$ given by
\small
\begin{equation}\widehat{\qw_{H_0}}(T_1>\tau)+\sum_{m=2}^M\widehat{\qw_{H_0}}\left(T_m>\frac{\tau}{\sqrt{m}} \ \bigg| \ \bigcap_{c=1}^{m-1}\left\{T_c\leq\frac{\tau}{\sqrt{c}}\right\}\right)\widehat{\qw_{H_0}}\left(\bigcap_{c=1}^{m-1}\left\{T_c\leq\frac{\tau}{\sqrt{c}}\right\}\right),\label{eq:bfe}\end{equation}
\normalsize
in our case are a special case of their formulations. For the sake of brevity, we only consider the boundaries due to \citet{p77} and \citet{obf} herein since they are the most frequently used. Regardless of our choice of boundary function, we obtain an estimator for $\tau^*$ by solving 
\begin{equation}\hat{\tau}_*=\underset{\tau}{\mathrm{argmin}}\left\{|\widehat{\mathcal{B}}^*(\tau)-\alpha|\right\},\label{eq:best}\end{equation}
for either $\widehat{\mathcal{B}}^*(\tau)\in\{\widehat{\mathcal{B}}_{\mathrm{P}}(\tau),\widehat{\mathcal{B}}_{\mathrm{OBF}}(\tau)\}$, after which we set $\tau_m=\hat{\tau}_*$ for the \citet{p77} boundary and $\tau_m=\hat{\tau}_*/\sqrt{m}$ for the \citet{obf} boundary, respectively.

\subsection{Dynamic Boundary Estimation}
\label{ss34}

As we noted in the previous section, it is possible to use the procedure we developed using Theorem 1 in Subsection \ref{ss33} to estimate $\hat{\tau}_*$ through \eqref{eq:best} at any interim time. Traditional practices would suggest that one should simply compute these boundaries at the first interim time and use this boundary for the remaining analyses if we do not reject the null hypothesis at first. However, the methods introduced in Subsection \ref{ss33} allow us to dynamically re-compute the value of $\hat{\tau}_*$ (and hence all of the test statistic boundaries) at every interim time should we need to continue collecting data. To this end, let $\hat{\tau}_*^{(k)}$ be the estimator obtained through \eqref{eq:best} at time $t_k$ and let $\tau_m^{(k)}$ be the respective test statistic specific boundaries computed using this value. At interim $k$, it is worth noting that we can now use the updated $\tau_m^{(k)}$ values that make use of more available information as opposed to the $\tau_m^{(1)}$ values as traditional methodology would dictate. We compare our new dynamically updated approach to the traditional method in our simulations and application sections.

\subsection{Incorporating Partial Information}
\label{ss35}

If the repeated measures are longitudinal, at any interim analysis point, there will likely be individuals with partial information. One common source of missingness occurs by design, that is, insufficient time has elapsed between patients visits to warrant the collection of all desired observations. More generally, missingness may occur sporadically with total independence of the outcome, for which we say the data are missing completely at random (MCAR), or it may occur independently of the outcome conditional on the present covariates, for which we way the data are missing at random (MAR). \citet{lz} discusses why the traditional GEE process automatically handles data that are missing completely at random, but the other forms of missingness must be addressed. \citet{bvrt} utilizes inverse probability-weighted (IPW) methods to address missingness, but a more accessible approach is multiple imputation by chained equations (MICE) as implemented in the package MICE \citep{mice}. This methodology is founded upon the arguments from \citet{r87}. In short, this approach uses $L$ datasets, each of which has been completed using one of the $L$ imputations as specified. Coefficients for the analyst's model of choice are then computed alongside their variances, and a pooled estimate and variance are each computed. We can then substitute the pooled coefficients and variances into our test statistic expressions and proceed as discussed in Subsection \ref{ss33}, provided ample imputations (e.g., $L\geq30$) are used, since the $t$ distribution would well approximate a normal distribution for this number of imputations. If a significantly smaller number of imputations were used instead, we could not apply the delta method to an arbitrary function $h$ and only hypotheses concerning $\beta$ itself can be tested. In this case, following the arguments are \citet{r87}, the test statistics should be divided by $p$ and will consequently follow an an $\mathrm{F}(p,\nu_m)$ distribution, where $\nu_m$ is the degrees of freedom parameters defined in \citet{r87} for the $m$th analysis, which is computed automatically through MICE. To avoid this complication and preserve the ability to test more general hypotheses, we strongly suggest the use of a generous number of imputations whenever possible. Notably, however, the F test is not difficult to conduct when the hypothesis of interest concerns only a scalar parameter. Still, it should not be challenging to utilize a sufficient number of imputations in the presence of modest missingness (e.g., $10$ to $30$\% missing data), so the use of the F test is generally reserved for large datasets in which a scalar coefficient of interest is considered.

\section{Simulations}
\label{s4}

Our simulations are conducted in the R programming language version 4.4.2 \citep{R} and make notable use of the helpful package (\texttt{geex}) developed by \citet{sh}, which appropriates the theory of M-estimation as described and summarized by \citet{sb}. We also make use of the older package (\texttt{gee}) due to \citet{vjc}. The total sample sizes ($n$) vary based on each simulation setting, with values of either 400, 500, or 600. Each of these total groups yields precisely five observations (i.e., $s_i=5$ for all $i=1,\dots,n$). A covariate $Z_{ik}$ is drawn randomly from the $\mathrm{N}(1,1/16)$ distribution and is meant to simulate a low-variance nuisance covariate.  A binary variable $A_{ik}$ is randomly generated for each group for all times (i.e., $A_{i1}=\dots=A_{i5}\in\{0,1\}$). Each outcome $Y_{ik}$ is also presumed to be binary as discussed below.

Time is considered in two distinct ways, showcasing the flexibility of our methods, but in both cases, the actual elapsed time values are set to $1/12$, $3/12$, $6/12$, $1$, and $2$ for $\mathbb{T}_{i1}, \mathbb{T}_{i2}, \mathbb{T}_{i3}, \mathbb{T}_{i4},$ and $\mathbb{T}_{i5}$ for all $i=1,\dots, n$, respectively. The first model treats time as a continuous predictor and considers its interaction with treatment. This formulation is a generalization of the example provided in \citet{jtg} and elicits the model structure
\[\ew(Y_{ik}\mid X_{ik})=\mu(X_{ik},\beta)=\mathrm{expit}(\beta_0+\beta_AA_{ik}+\beta_T\mathbb{T}_{ik}+\beta_{AT}A_{ik}\mathbb{T}_{ik}+\beta_ZZ_{ik}),\]
where $X_{ik}=(A_{ik}, \mathbb{T}_{ik}, Z_{ik})^\T$, $\beta=(\beta_0, \beta_A, \beta_T, \beta_{AT}, \beta_Z)^\T$ and $\mathrm{expit}(x)=(1+e^{-x})^{-1}$. Within group covariances are computed in a manner similar to the simulations from \citet{jtg}, with a few modifications. Letting $\mathrm{logit}(x)=\mathrm{expit}^{-1}(x)$, we let $\sigma_{ikr}=\sigma_{irk}=\mathrm{Cov}\left\{\mathrm{logit}(Y_{ik}),\mathrm{logit}(Y_{ir})\right\}=e^{-|\mathbb{T}_{ik}-\mathbb{T}_{ir}|}.$ To this end, let $W_i=(\sigma_{ikr})_{k,r=1,\dots,5}$ be the covariance matrix for group $i$ as generated. We note that if the times were equidistant, this would be an AR-1 structure, which is commonly used as a working correlation structure in clinical settings. We consider independent and exchangeable structures, both of which are obviously misspecified, even under equidistant times. This facilitates an examination of the robustness of our method in the face of poor modeling choices.

Denoting $\lambda(X_{ik},\beta)=\mathrm{logit}\{\mu(X_{ik},\beta)\}$, and $\lambda_i=(\lambda(X_{i1},\beta),\dots,\lambda(X_{i5},\beta))^\T$, the outcomes are generated using $(\mathrm{logit}(Y_{i1}),\dots,\mathrm{logit}(Y_{i5}))^\T\sim\mathrm{MVN}_5(\lambda_i, W_i)$ for all $i=1,\dots,n$. We then set $\pi_{ik}=\mathrm{expit}\{\mathrm{logit}(Y_{ik})\}$ and draw $Y_{ik}\sim\mathrm{Ber}(\pi_{ik})$ for every $k=1,\dots,5$ and $i=1,\dots,n$. Accordingly, $\{\{(X_{ik},Y_{ik})\}_{k=1}^5\}_{i=1}^n$ is our simulated dataset.

The second time model bears a closer resemblance to the simulations conducted by \citet{jtg}. Times are treated as discrete, categorized, ordinal variables reflected by the usage of indicator random variables. The second model is accordingly 
\[\mu(X_{ik},\beta)=\mathrm{expit}\left[\beta_0+\beta_AA_{ik}+\beta_ZZ_{ik}+\sum_{t=2}^5\left\{\beta_t\mathbb{1}(k=t)+\beta_{At}A_{ik}\mathbb{1}(k=t)\right\}\right]\]
where the covariate vector is now adjusted such that $X_{ik}=(A_{ik}, Z_{ik}, \{\mathbb{1}(t=k)\}_{k=2}^5)^\T$ and similarly $\beta=(\beta_0,\beta_A,\beta_Z,\{\beta_t\}_{t=2}^5,\{\beta_{At}\}_{t=2}^5)^\T.$ Here, $k=1$ is used as an implicit baseline, which concords with the fact that there is neither a $\beta_1$ nor $\beta_{A1}$ coefficient in our model. We note that the gaps between times are not reflected in this model, but they are still used to construct the within-group covariance matrix as before. The data generation process for the treatment indicator, nuisance covariate, correlation structure, and outcome generation is identical. We refer to each of these models as the continuous time model and discrete time model respectively.

In the continuous time model, we fix $\beta_0=\beta_A=\beta_X=0.1$ and we set $\beta_t=-0.1$. The value of our interaction coefficient $\beta_{AT}$ is zero in our type I error simulations, though it changes for our different power level runs. In the discrete time model, we fix $\beta_0=\beta_A=\beta_X=0.1$ and set $\beta_t=-0.1$ for all $t=2,\dots,5$. Similarly, the values of $\beta_{At}$ for $t=2,\dots,5$ are zero in our type I error simulations and change our power level runs. For each time model, we test whether or not a treatment-time interaction effect exists. In the continuous case, this simply amounts to testing $H_0: \beta_{AT}=0$ against $H_0: \beta_{AT}\neq0$. In the discrete case, the hypothesis is slightly more complicated, taking the form $H_0: Q\beta=\mathbb{0}_4$ against $H_a: Q\beta\neq\mathbb{0}_4$, where $Q$ is a $4\times11$ matrix given in block form as $Q=[\boldsymbol{0}_{4\times7} \ \boldsymbol {I}_4]$, wherein $\boldsymbol{0}_{4\times7}$ is a $4\times 7$ matrix of zeroes and $\boldsymbol {I}_4$ is a $4\times4$ identity matrix. This is merely a way of testing whether or not any of the coefficients $\{\beta_{At}\}_{t=2}^5$ are significantly different than zero.

Finally, a third scenario is considered to examine our method's performance under standard missingness structures. The model and data generation procedure are exactly the same as for the standard continuous model above, but we introduce additional complexities to simulate the natural incompleteness of real datasets. Both missingness by design and missing at random are considered as processes for our case. Missingness by design is facilitated through the creation of an elapsed time variable $\mathcal{T}_{ik}$ by using the $\mathbb{T}_{ik}$ variables. Specifically, we set $\mathcal{T}_{1k}=\mathbb{T}_{1k}$ for $k=1,\dots, 5$ Then, we recursively define $\mathcal{T}_{i1}=\mathcal{T}_{(i-1)1}+E$, $\mathcal{T}_{i2}=\mathcal{T}_{i1}+2/12$, $\mathcal{T}_{i3}=\mathcal{T}_{i1}+5/12$, $\mathcal{T}_{i4}=\mathcal{T}_{i1}+11/12$, and $\mathcal{T}_{i5}=\mathcal{T}_{i1}+23/12$ for all $i\geq2$, where $E\sim\mathrm{Exp}(96)$ is a randomly drawn exponential random variable. After 85 and 185 patients have provided their baseline observations, 1.25 time units are added to the elapsed time to simulate a pause in recruitment and allow for full sets of observations to be collected before the first and second interims, respectively. This simulates the ordered arrival of groups to be considered and attempts to mirror a clinical setting. Once the first observation from the $n_m$th group has been recorded at time $\mathcal{T}_{n_m1}$, all rows of the dataset corresponding to an elapsed time $\mathcal{T}_{ik}$ such that $\mathcal{T}_{ik}>\mathcal{T}_{n_m1}$ are rendered incomplete, such that their corresponding $Y_{ik}$ and $Z_{ik}$ values are missing. This demonstrates the process of missingness by design, whereby an insufficient amount of time has elapsed for all subjects to provide all five necessary observations. Simultaneously, another form of missingness is also introduced into the data. For every $Z_{ik}$, we draw the Bernoulli variables $R_{ik}\sim\mathrm{Ber}\{\mathrm{expit}(2.1-Z_{ik})\}$ for ``high'' missingness and $R_{ik}\sim\mathrm{Ber}\{\mathrm{expit}(3.1-Z_{ik})\}$ for ``low'' missingness. If $R_{ik}=0$, then $Y_{ik}$ and $Z_{ik}$ are missing, and if $R_{ik}=1$, they are left unobstructed. This is a simple way of injecting missingness at random (MAR) into our dataset. To handle these two forms of missingness, we use the MICE package \citet{mice} with 30 imputations per analysis with random forest imputation as provided through the package's technical argumentation. GEE coefficient estimates and their standard errors are then pooled according to the recommendations of \citet{r87} and these estimates and standard errors are then used in our method as described in Section \ref{s3}. Imputations that yield singular covariances are discarded and replaced by nonpathological imputations.

In all simulations, 1000 Monte Carlo iterations were used to compute an empirical estimate of the Pocock and OBF boundaries as well as to approximate power or type I error.

\begin{table}
\caption{Empirical type I error estimates for the interaction test using static and dynamic Pocock and O'Brien-Fleming (OBF) boundaries under various effect sizes, sample sizes, and working correlation structures for the continuous and discrete time models with no missing data. Ind. = Independent; Exc. = Exchangeable.}
\label{t1}
\begin{center}
\begin{tabular}{l|cccccccc}
\hline\hline
 &  &  & & \multicolumn{2}{c}{\underline{Pocock}}& \multicolumn{2}{c}{\underline{OBF}}\\
Model & \multicolumn{1}{c}{$n$} & \multicolumn{1}{c}{$V_i$} & \multicolumn{1}{c}{Na\"ive} & \multicolumn{1}{c}{Static} &  \multicolumn{1}{c}{Dynamic} & \multicolumn{1}{c}{Static} &  \multicolumn{1}{c}{Dynamic}\\ \hline
 Continuous & 400&Ind.&0.108&0.050&0.050&0.059&0.054\\
& &Exc.&0.102&0.050&0.051&0.055&0.053\\\hline
 Continuous & 500&Ind.&0.112&0.053&0.057&0.056&0.052 \\
& &Exc.&0.114&0.055&0.054&0.060&0.058\\ \hline
 Continuous & 600&Ind.&0.100&0.051&0.049&0.048&0.045\\
& &Exc.&0.114&0.052&0.052&0.049&0.050\\\hline\hline
Discrete & 400&Ind.&0.115&0.057&0.054&0.055&0.056\\
 & &Exc.&0.109&0.054&0.052&0.049&0.051\\ \hline
Discrete & 500 &Ind.&0.113&0.059&0.058&0.051&0.048\\
& &Exc.&0.128&0.068&0.069&0.062&0.069\\ \hline
Discrete & 600&Ind.&0.128&0.055&0.055&0.063&0.063\\
 & &Exc.&0.101&0.049&0.049&0.047&0.047\\
\hline
\end{tabular}
\end{center}
\end{table}

\begin{table}
\caption{Empirical type I error estimates for the interaction test using static and dynamic Pocock and O'Brien-Fleming (OBF) boundaries under various effect sizes, sample sizes, and working correlation structures for the continuous time model involving missing data. Ind. = Independent; Exc. = Exchangeable.}
\label{t2}
\begin{center}
\begin{tabular}{l|cccccccc}
\hline\hline
 &  &  & & \multicolumn{2}{c}{\underline{Pocock}}& \multicolumn{2}{c}{\underline{OBF}}\\
Missingness & \multicolumn{1}{c}{$n$} & \multicolumn{1}{c}{$V_i$} & \multicolumn{1}{c}{Na\"ive} & \multicolumn{1}{c}{Static} &  \multicolumn{1}{c}{Dynamic} & \multicolumn{1}{c}{Static} &  \multicolumn{1}{c}{Dynamic}\\ \hline
Low  & 400 &Ind.&0.117&0.060&0.059&0.052&0.052\\
& &Exc.&0.152&0.065&0.067&0.069&0.070\\\hline
Low  & 500 &Ind.&0.126&0.066&0.068&0.071&0.067\\
& &Exc.&0.133&0.064&0.065&0.059&0.057\\\hline
Low  & 600 &Ind.&0.103&0.055&0.058&0.061&0.058\\
& &Exc.&0.126&0.066&0.066&0.066&0.063\\\hline\hline
High  & 400 &Ind.&0.116&0.053&0.056&0.056&0.058\\
& &Exc.&0.140&0.068&0.067&0.075&0.073\\\hline
High  & 500 &Ind.&0.120&0.060&0.058&0.056&0.056\\
& &Exc.&0.132&0.075&0.079&0.071&0.077\\\hline
High  & 600 &Ind.&0.113&0.056&0.055&0.055&0.059\\
& &Exc.&0.141&0.071&0.068&0.064&0.066\\\hline
\end{tabular}
\end{center}
\end{table}

\begin{table}
\caption{Empirical power estimates for the interaction test using static and dynamic Pocock and O'Brien-Fleming (OBF) boundaries under various effect sizes, sample sizes, and working correlation structures for the continuous time model. Ind. = Independent; Exc. = Exchangeable.}
\label{t3}
\begin{center}
\begin{tabular}{cccccccc}
\hline\hline
& & & \multicolumn{2}{c}{\underline{Pocock}} & \multicolumn{2}{c}{\underline{OBF}}\\
$n$ & \multicolumn{1}{c}{$\beta_{AT}$} & \multicolumn{1}{c}{$V_i$} & \multicolumn{1}{c}{Static} &  \multicolumn{1}{c}{Dynamic} & \multicolumn{1}{c}{Static} &  \multicolumn{1}{c}{Dynamic}\\ \hline
400&$-0.40$&Ind.&0.613&0.614&0.671&0.667\\
&&Exc.&0.619&0.619&0.675&0.677\\\hline
400&$-0.45$&Ind.&0.731&0.731&0.770&0.771\\
&&Exc.&0.724&0.727&0.766&0.770\\\hline
400&$-0.5$&Ind.&0.822&0.817&0.865&0.863\\
&&Exc.&0.821&0.820&0.852&0.851\\\hline\hline
500&$-0.40$&Ind.&0.729&0.724&0.768&0.772\\
&&Exc.&0.737&0.740&0.790&0.791\\\hline
500&$-0.45$&Ind.&0.832&0.828&0.851&0.855\\
&&Exc.&0.838&0.841&0.876&0.873\\\hline
500&$-0.50$&Ind.&0.886&0.885&0.918&0.917\\
&&Exc.&0.893&0.896&0.920&0.928\\\hline\hline
600&$-0.40$&Ind.&0.798&0.803&0.836&0.837\\
&&Exc.&0.794&0.795&0.834&0.834\\\hline
600&$-0.45$&Ind.&0.887&0.888&0.923&0.921\\
&&Exc.&0.886&0.891&0.921&0.920\\\hline
600&$-0.50$&Ind.&0.936&0.943&0.962&0.958\\
&&Exc.&0.946&0.947&0.960&0.963\\\hline\hline
\end{tabular}
\end{center}
\end{table}

\begin{table}
\caption{Empirical power estimates for the interaction test using static and dynamic Pocock and O'Brien-Fleming (OBF) boundaries under various effect sizes, sample sizes, and working correlation structures for the discrete time model. Ind. = Independent; Exc. = Exchangeable; $v=(3/12,6/12,1,2)^\T$.}
\label{t4}
\begin{center}
\begin{tabular}{ccccccc}
\hline\hline
&&&\multicolumn{2}{c}{\underline{Pocock}}
&\multicolumn{2}{c}{\underline{OBF}}\\
$n$ & \multicolumn{1}{c}{$\beta_{At}$} & \multicolumn{1}{c}{$V_i$} & \multicolumn{1}{c}{Static} &  \multicolumn{1}{c}{Dynamic}& \multicolumn{1}{c}{Static} &  \multicolumn{1}{c}{Dynamic}\\ \hline
400&$-0.40v$&Ind. &0.428&0.441&0.503&0.502\\
&&Exc. &0.453&0.459&0.492&0.486\\\hline
400&$-0.45v$&Ind. &0.529&0.527&0.593&0.594\\
&&Exc. &0.531&0.527&0.597&0.608\\\hline
400&$-0.50v$&Ind. &0.653&0.658&0.729&0.727\\
&&Exc. &0.655&0.650&0.710&0.705\\\hline\hline
500&$-0.40v$&Ind. &0.561&0.559&0.621&0.620\\
&&Exc. &0.541&0.539&0.611&0.612\\\hline
500&$-0.45v$&Ind. &0.639&0.633&0.710&0.708\\
&&Exc. &0.671&0.676&0.736&0.726\\\hline
500&$-0.50v$&Ind. &0.751&0.755&0.809&0.812\\
&&Exc. &0.780&0.781&0.838&0.831\\\hline\hline
600&$-0.40v$&Ind. &0.622&0.623&0.690&0.679\\
&&Exc. &0.620&0.626&0.685&0.689\\\hline
600&$-0.45v$&Ind. &0.757&0.752&0.822&0.824\\
&&Exc. &0.746&0.747&0.806&0.801\\\hline
600&$-0.50v$&Ind. &0.846&0.851&0.893&0.896\\
&&Exc. &0.859&0.863&0.894&0.897\\\hline\hline
\end{tabular}
\end{center}
\end{table}

\begin{table}
\caption{Empirical power estimates for the interaction test using static and dynamic Pocock and O'Brien-Fleming (OBF) boundaries under various effect sizes, sample sizes, and working correlation structures for the continuous time model involving missing data. Ind. = Independent; Exc. = Exchangeable.}
\label{t5}
\begin{center}
\begin{tabular}{l|cccccccc}
\hline\hline
& & & & \multicolumn{2}{c}{\underline{Pocock}} & \multicolumn{2}{c}{\underline{OBF}}\\
Missingness& $n$ & \multicolumn{1}{c}{$\beta_{AT}$} & \multicolumn{1}{c}{$V_i$} & \multicolumn{1}{c}{Static} &  \multicolumn{1}{c}{Dynamic} & \multicolumn{1}{c}{Static} &  \multicolumn{1}{c}{Dynamic}\\ \hline
Low&400&$-0.40$&Ind.&0.592&0.594&0.640&0.644\\
&&&Exc.&0.586&0.584&0.636&0.634\\\hline
Low&400&$-0.45$&Ind.&0.666&0.662&0.718&0.713\\
&&&Exc.&0.700&0.702&0.759&0.764\\\hline
Low&400&$-0.5$&Ind.&0.764&0.772&0.828&0.823\\
&&&Exc.&0.789&0.786&0.825&0.824\\\hline\hline
Low&500&$-0.40$&Ind.&0.689&0.689&0.735&0.734\\
&&&Exc.&0.698&0.705&0.743&0.745\\\hline
Low&500&$-0.45$&Ind.&0.777&0.787&0.836&0.837\\
&&&Exc.&0.805&0.803&0.844&0.848\\\hline
Low&500&$-0.50$&Ind.&0.851&0.857&0.887&0.892\\
&&&Exc.&0.871&0.872&0.914&0.912\\\hline\hline
Low&600&$-0.40$&Ind.&0.778&0.768&0.814&0.820\\
&&&Exc.&0.801&0.803&0.823&0.823\\\hline
Low&600&$-0.45$&Ind.&0.846&0.845&0.880&0.881\\
&&&Exc.&0.866&0.871&0.904&0.902\\\hline
Low&600&$-0.50$&Ind.&0.917&0.911&0.932&0.934\\
&&&Exc.&0.941&0.949&0.963&0.962\\\hline\hline
High&400&$-0.40$&Ind.&0.483&0.488&0.533&0.531\\
&&&Exc.&0.544&0.541&0.604&0.598\\\hline
High&400&$-0.45$&Ind.&0.626&0.624&0.668&0.668\\
&&&Exc.&0.622&0.625&0.672&0.664\\\hline
High&400&$-0.5$&Ind.&0.687&0.698&0.747&0.751\\
&&&Exc.&0.709&0.714&0.757&0.762\\\hline\hline
High&500&$-0.40$&Ind.&0.583&0.584&0.648&0.643\\
&&&Exc.&0.627&0.631&0.675&0.674\\\hline
High&500&$-0.45$&Ind.&0.707&0.711&0.766&0.765\\
&&&Exc.&0.733&0.736&0.782&0.780\\\hline
High&500&$-0.50$&Ind.&0.784&0.785&0.828&0.835\\
&&&Exc.&0.843&0.844&0.875&0.880\\\hline\hline
High&600&$-0.40$&Ind.&0.683&0.676&0.746&0.747\\
&&&Exc.&0.708&0.708&0.763&0.754\\\hline
High&600&$-0.45$&Ind.&0.794&0.791&0.843&0.840\\
&&&Exc.&0.813&0.820&0.857&0.857\\\hline
High&600&$-0.50$&Ind.&0.856&0.854&0.893&0.886\\
&&&Exc.&0.868&0.865&0.899&0.898\\\hline\hline
\end{tabular}
\end{center}
\end{table}

Tables \ref{t1} and \ref{t2} present the type I error rates for the continuous and discrete complete data scenarios, as well as the low and high missingness scenarios, for a test with nominal $5\%$ type I error. In all cases, the na\"ive test statistics exhibit highly inflated type I error, underscoring the necessity of sequential methods to control type I error. Across both tables, the differences in type I error rates between the Pocock and O’Brien–Fleming (OBF) boundaries are negligible, as are the differences between static and dynamic boundary calculations.
Table \ref{t2} shows somewhat less precise type I error estimates because of the additional variability incurred from the use of multiple imputations. Nonetheless, proper coverage is still attained asymptotically. The empirical type I error rates shown in Table \ref{t1} are highly consistent across sample sizes, working correlation structure, and boundary type, ranging from 0.047 to 0.069, with the variation attributable to Monte Carlo noise. Corresponding estimates in Table \ref{t2} are slightly more variable, reaching a maximum of 0.079 in the most extreme case. This increased variability due to noise reflects the additional uncertainty associated with the high-missingness scenario. Nevertheless, the type I error of the test is well-maintained.

Tables \ref{t3} and \ref{t4} present the estimated power for the complete continuous and discrete models, respectively, under a range of alternative scenarios. In general, the choice of static versus dynamic boundary computation, or the choice of working correlation structure has negligible impact on the power of any given test. Although the OBF boundaries show a slight power advantage at smaller sample sizes (especially for large effects), this difference diminishes as the sample size increases. The power of the test increases with the effect size as expected. The discrete model consistently yields lower power than the continuous model, reflecting its greater complexity and formulation. For the lowest interaction effect size $\beta_{AT} = -0.40$  in Table \ref{t3} , using static boundaries from Pocock method, we achieve an empirical power of 0.613 for $n=400$, which increases to 0.798 for $n=600$. For the highest effect size of $\beta_{AT} = -0.50$, the dynamic OBF boundaries resulted in  empirical power estimates of 0.851 with $n=400$, which increases to 0.963 with $n$ increasing to $600$, both with an exchangeable working correlation structure.. Table \ref{t4} yields similar results, with a power of 0.423 at the lowest and 0.897 at the highest, reflecting a greater necessary burden to ascend to higher power in the case of the most complex, discrete-time model. 

Finally, Table \ref{t5} showcases the power in the missing data scenarios. A small loss of power due to missing data imputation is exhibited, compared to the full-data scenario presented in Table \ref{t3}. Table \ref{t5} demonstrates nearly identical power ranges to those in Table \ref{t3} for the low missingness model, with the same upper bound for empirical power, while the high missingness model rises to an empirical power of 0.899 at the highest effect size with static OBF boundary at $600$ sample size, demonstrating the small impact of missing data on power.

\section{Application to the VIRAHEP-C Study}
\label{s5}

We demonstrate the utility of our method by using it to analyze the VIRAHEP-C dataset as provided by the National Institute for Diabetes and Digestive and Kidney Diseases at the National Institute of Health. The study aims to investigate the potential effect of race on the efficacy of antiviral therapy for hepatitis C. The specific treatment in question is a weekly 180$\mu$g dose of pegylated interferon alfa-2a alongside a daily 1000-1200mg dose of ribavirin. Greater details are provided in \citet{hepc} and\citet{hepc2}. A total of 205 Caucasian-American and 196 African-American treatment-naive hepatitis C genotype I patients were recruited who received this treatment starting at baseline. The viral load count was the primary measurement used to assess the efficacy of the treatment, with achieving an undetectable viral load being the treatment target. Thus, our outcome of interest is the binary variable $Y$, which is an indicator of whether or not a patient currently exhibits a detectable ITT viral load count or not. Besides race being the main covariate of interest, we aimed to control for age, sex, and body mass index (BMI) in the analysis. 

Early efficacy of antiviral therapy for Hepatitis C is assessed over the first 28 days of therapy \citep{hepc}. In this study over the span of 28 days, most subjects have measurements on days 0, 1, 2, 7, 14, and 28. However, each subject does not necessarily have the same number of observations as the others, yielding a heterogeneity of group sizes . Any missing covariates are presumed to be missing at random. Though we have the full dataset in our possession, realistic studies would have to plan interim analyses in advance. Hence, we emulate this process by conducting our first interim analysis after the 134th patient has provided their first observation and our second interim analysis after the 269th patient has provided their first observation. Observations whose corresponding collection time surpasses the interim analysis time were assumed to have outcomes ($Y$) missing by design.

We select an exchangeable within-subject variance structure and set $\ew(Y_{ik}\mid X_{ik})$ as
\small
\[\mathrm{expit}\left(\beta_0+\beta_R\mathrm{Race}_{i}+\beta_S\mathrm{Sex}_{i}+\beta_E\mathrm{Year}_i+\beta_T\mathrm{Time}_{ik}+\beta_B\mathrm{BMI}_{i}+\beta_I\mathrm{Race}_{i}\mathrm{Time}_{ik}\right),\]
\normalsize
where $i$ indexes individuals, $k$ indexes time of observation for an individual, $\mathrm{Race}_{i}$ is an indicator with value 1 if the $i$th subject is African-American, $\mathrm{Year}_i$ is the $i$th subject's birth year, $\mathrm{Sex}_{i}$ indicates if the $i$th subject is female, and $X_{ik}=(\mathrm{Race}_{i}, \mathrm{Sex}_{i}, \mathrm{Time}_{ik}, \mathrm{BMI}_{i},\mathrm{Year}_i)^\T$ is the aggregate covariate vector for the $i$th individual at a specific collection time $k$. The analysis was again conducted R, this time in version 4.5.2 due to updates. Figure \ref{f1} shows the empirical proportions of subjects with detectable viral load counts over time by race at each analysis time. We note that at the first two analysis times, the data show a widening of the gap between the two lines over time suggesting a potentially significant difference in the rate of decline in proportion with detectable viral load between African-Americans and Caucasian-Americans, while the third does not show this phenomenon. A reasonable research question is whether this racial difference observed is statistically significant, which is equivalent to testing $H_0:\beta_I=0$ against $H_a:\beta_I\neq0$.
\begin{figure}[!ht]
    \centering
    \includegraphics[width=12cm]{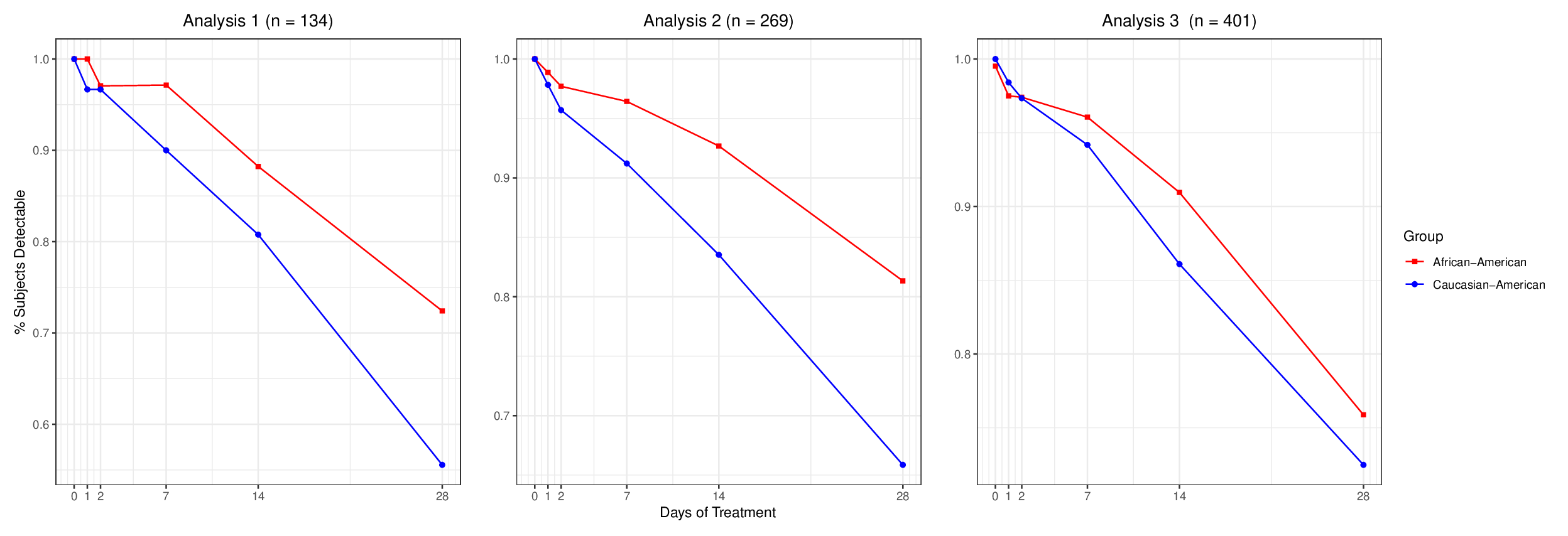}
    \caption{Empirical detectability proportions for African-Americans and Caucasian-Americans at Each Analysis Time.}
    \label{f1}
\end{figure}
We used both the Pocock and OBF boundaries in our hypothetical interim analyses to see if there would be a discernible difference between implementing one or the other. A seed was fixed and one million Monte Carlo iterations were used to compute the boundary, with thirty imputations for missing data. The static Pocock and OBF boundaries we computed are $5.235$ and $(7.524, 5.320, 4.344)$, respectively. We obtained a test statistic of $0.003$ and continued past the first interim. The dynamic Pocock and OBF boundaries from the second analysis were $5.233$ and $5.302$, respectively, and we obtained a test statistic of $0.098$. We again fail to reject $H_0$ by all boundary computations and continue. At the last analysis, we obtain a dynamic Pocock and OBF boundaries of $5.227$ and $4.329$ with a test statistic of $0.046$. Hence, across all three analysis, we fail to reject the null hypothesis, which suggests that there is no statistically significant interaction effect between race and time on the early efficacy of hepatitis C antiviral therapy in terms of detectability. We note that this conclusion was heavily influenced by the apparent noisiness of the data (e.g., the second interim estimate for $\beta_I$ was $-0.042$ with a standard error of $0.133$, yielding the largest test statistic).

\section{Discussion}
\label{s6}
We have constructed a general theoretical framework for robust sequential analysis in settings involving longitudinal and clustered outcomes that also accommodates missingness at random and missingness by design. Our testing procedure exhibits proper coverage and desirable power even under misspecification and allows for the intuitive generalization of multiple imputation with chained equations to the setting of interest. Moreover, the computational effort required to carry out these procedures is non-intensive, as the technical framework for obtaining standard robust covariance matrices from a GEE analysis following \citet{lz} is already in place and widely available. More importantly, our approach allows for any class of hypotheses as specified in Sections \ref{s2} and \ref{s3} to be tested and does not demand the imposition of a nuisance parameter-treatment parameter dichotomy when constructing the model of interest. This development permits a greater level of freedom when conducting such analyses, especially when compared to the other existing approaches like those developed by \citet{wsl}, \citet{lkt}, and more recently by \citet{jtg}. Future work could focus on extending our methodology to incorporate $\alpha$ spending functions \citep{ld}, another popular way of progressively partitioning type I error as interim analyses proceed, as opposed to fixing the available information at each analysis \textit{a priori}.

\section*{Acknowledgements}

This work was partially funded through a Patient-Centered Outcomes Research Institute (PCORI) Award (ME-2021C3-24215). The statements in this work are solely the responsibility of the authors and do not necessarily represent the views of PCORI, its Board of Governors or the Methodology Committee. We thank all the stakeholder advisory board members of the study: Jordan Karp of the University of Arizona, Julie Bauman of the George Washington University, Peter Thall of the MD Anderson Cancer Center, Douglas Landsittel of the University of Buffalo, and Gong Tang, Joyce Chang, and Meredith Lotz Wallace of the University of Pittsburgh, for their input on the project.

\vspace*{-8pt} 

\bibliographystyle{biom} 
\bibliography{References}

\label{lastpage}

\end{document}